\documentclass[letterpaper, 10 pt, conference]{ieeeconf}

\IEEEoverridecommandlockouts
\usepackage[english]{babel}

\usepackage{amssymb}
\usepackage{mathtools}
\usepackage{graphicx}
\usepackage{mathrsfs}

\newtheorem{theorem}{Theorem}[section]
\newtheorem{assumption}{Assumption}[section]
\newtheorem{lemma}[theorem]{Lemma}
\newtheorem{coro}[theorem]{Corollary}
\newtheorem{definition}{Definition}
\newtheorem{remark}{Remark}
\newtheorem{example}{Example}[section]

\newcommand{\z}{{\boldsymbol z}}
\newcommand{\y}{{\boldsymbol y}}
\newcommand{\w}{{\boldsymbol w}}
\newcommand{\bu}{{\boldsymbol u}}
\newcommand{\LL}{\mathcal{L}}
\newcommand{\WW}{\mathcal{W}}
\newcommand{\LLD}{\mathscr{L}}
\newcommand{\RR}{\mathscr{R}}
\newcommand{\EE}{\mathscr{E}}
\newcommand{\DD}{\mathscr{D}}
\newcommand{\cS}{\mathcal{S}}
\newcommand{\sat}{{\mathrm{sat}}}
\newcommand{\sign}{{\mathrm{sign}}}
\newcommand{\diag}{{\mathrm{diag}}}

\title{\LARGE \bf
Local Input-to-State Stability for 
Consensus in the Presence of
Intermittent
Communication and Input Saturation
}

\author{Thales C. Silva and M. Ani Hsieh
\thanks{
This work was supported by ARL DCIST CRA W911NF-17-2-0181 and Office of Naval Research (ONR) Award No. N00014-19-1-2253.
}
\thanks{
The authors are with the
General Robotics, Actuation, 
Sensing, and Perception (GRASP) Laboratory at University of Pennsylvania, Philadelphia, PA,
19104, USA
        {\tt\small \{scthales, m.hsieh\}@seas.upenn.edu}.
        Accepted for publication in the Proceedings of the IEEE Conference on Decision and Control (CDC), 2023.
Final version: https://doi.org/10.1109/CDC49753.2023.10383364
        }
}

\begin{document}

\maketitle
\thispagestyle{empty}
\pagestyle{empty}

\begin{abstract}
This paper addresses the problem of reaching
 consensus under input saturation and intermittent communication, which can hinder the convergence of the system.
 We propose a method that translates the consensus into an equivalent stability problem. Then, we compute 
 bounded sets that enclose the initial conditions and the evolution of trajectories leading to local input-to-state stability for systems interconnected over directed intermittent topologies. Our contributions include sufficient conditions for stability and stabilization of multi-agent systems under intermittent interactions and saturating inputs, with the ability to evaluate disturbance tolerance and rejection 
based on the regions that enclose the 
system's
trajectories. We define disturbance rejection in terms of the $\LLD_2$ gain, and formulate stability and controller design conditions as convex optimization problems.
Our method enable the maximization of regions that ensure local input-to-state stability, we provide numerical examples highlighting the trade-offs between mean frequency of intermittent interactions, disturbance energy, and convergence region size.
\end{abstract}

\vspace{-0.15cm}
\section{Introduction}
\vspace{-0.15cm}
Multi-agent systems consist of multiple agents that can interact, coordinate, cooperate, and/or compete with one another to perform complex tasks (see \cite{Li2019,Chen2021} for recent advances).  These systems have  
wide applicability 
across different disciplines,
such as 
robotic swarms \cite{Yu2021},
environmental monitoring 
\cite{Lynch2008},
and surveillance \cite{Yu2020}.
In many applications, 
the agents need to reach agreement on the value of a variable of interest, 
resulting in the multi-agent system 
achieving {\it consensus}.
There is a vast and increasing literature on the consensus problem \cite{thales2021,thales2023,Chen2019b,Olfati-Saber2007}.

In distributed consensus, agents within a
team must interact with each other, and thus the majority of the literature assumes
all-time network connectivity between agents.
Nevertheless, 
all-time
connectivity, 
either by control or by assumption,
can be overly conservative,
and even
impractical, especially for monitoring large-scale environments like the ocean
\cite{XiYuDARS},
underground tunnels
\cite{Chaimowicz2005},
or urban cities where wireless signal can be severely attenuated due to obstacle occlusions and other environmental factors.

In real-world applications, multi-agent systems must contend with intermittent network connectivity, input saturation, modeling errors, and exogeneous disturbances, all of which can lead to significant performance degradation of the team.  In terms of consensus, these conditions can give rise to deteriorating convergence rate, creation of limit cycles, and even instability \cite{Tarbouriech2011,hu2001control}.  Existing work in multi-agent systems mostly address these challenges individually.  Works focused on the effects intermittent network connectivity on multi-agent systems include \cite{Song2017,Mu2018,Poulsen2019}.  Unfortunately, these strategies are only developed for individual agents with either first- or second-order dynamics, with some exceptions \cite{Gao2021,Lin2021}, and thus do not generalize to systems composed of agents with more complex dynamics.  Literature that focus on input saturation effects include \cite{Wei2011,Zhao2016,Zhao2017,Guan2020}.  However, these works either assumes directed \cite{Wei2011} or strongly connected \cite{Zhao2016,Zhao2017} network topologies.  And while \cite{Guan2020} proposed an adaptive fault-tolerant control in the presence of $\LL_2$-limited disturbances and
input saturation, all these works assume \textit{asymptotically null controllable agents}, {\it i.e.}, open-loop
linear systems with no poles on the right-half plane in the continuous-time domain.

An objective of this work is to investigate the impact of intermittent communication topologies on a multi-agent system's ability to achieve consensus.  In particular, we are interested in systems composed of individuals with more general dynamics models subject to constraints and exogenous disturbances.  The current limitations in existing literature is partly due to the fact that \cite{Schmitendorf1980,Sussmann1994} has shown that \textit{global} convergence of linear systems with saturating inputs can  only be achieved by {asymptotically null controllable systems.}  As such, the system's dynamics needs to be open-loop stable, in the Lyapunov sense, due to saturation.
Hence, global asymptotic convergence in these systems can only happen under overly conservative conditions.  To overcome this limitation, it is necessary to characterize regions for initial conditions from which the convergence can be guaranteed, especially when the objective is to design stabilizing feedback strategies for open-loop {\it unstable} linear systems.  From a consensus perspective, this means that it is critical to estimate relative regions that encompass the agents' states from which the group's convergence can be achieved. This is a fundamentally more general and challenging problem than assuming asymptotic null controllability.


Towards this end, we propose a methodology capable of characterizing local stability
for the general case of potentially open-loop unstable
agents ({\it e.g.}, agents with poles of the open-loop system on the right-half plane), subject to
stochastic nonsynchronous link formations, unlike most existing work \cite{Song2017,Mu2018,Poulsen2019,Gao2021,Lin2021}.  To the best of the authors' knowledge,
the combined impact of saturating inputs and intermittent communications have only been addressed in
\cite{Fan2019,Xie2020,Wang2015} but only for asymptotically null controllable agents and assume communication link activation and deactivation happen in a synchronized fashion \cite{Fan2019,Xie2020,Wang2015}. 
%
We 
consider the synthesis of distributed stabilizing controllers for multi-agent systems subject to saturating inputs
over time-varying stochastic communication topologies, with agents subject to disturbances limited in energy and non-zero initial conditions.  
 This is a particularly challenging task since saturating inputs require the characterization of regions that guarantee the system convergence for stability \cite{Tarbouriech2011}, while disturbances and  intermittent communication among agents directly affects the structure of these regions. 
We propose a convex approach to address this problem and ensure that trajectories starting within a specific set remain inside a larger outer set. Disturbance rejection is then defined as the measurement of relative sizes of these two sets for whenever the initial conditions of the network are different from zero, building upon previous work on single systems \cite{hu2001control}.

\subsubsection*{\bf Notation}

For a matrix
$M$, $M>0$ $(M\geq0)$
denotes that $M$ is
positive definite (semi-definite).
$M'$ stands for the transpose
of $M$.
The $i$th row (element) of the
matrix (vector) $N$ is denoted by
$N_{(i)}$, while $N_{(i)}'$
 represents the transpose 
 of the $i$th row
of $N$.
 $\diag(M_1,\dots,M_k)$
represents a block diagonal
matrix with matrices
$M_1,\dots,M_k$ in the main
diagonal. Transposed entries
in a symmetric matrix
are represented by $*$.
The Kronecker product
of matrices $M_1$ and
$M_2$ is
denoted by $M_1\otimes M_2$.

\vspace{-0.18cm}
\section{Preliminaries and Problem Formulation}
\label{sec:2}
\vspace{-0.16cm}
This section introduces the representation for agent interaction and the control model for closed-loop agents, including the transformation of consensus into a stability problem.

\vspace{-0.16cm}
\subsection{Algebraic Graph Theory}
\vspace{-0.15cm}
A graph is denoted by 
$\mathcal{G}=(\mathcal{V},\mathcal{E})$, in which 
$\mathcal{V}=\{1,...,N\}$ is a set of $N$ vertices and
$\mathcal{E}\subseteq \mathcal{V}\times \mathcal{V}$ is 
a set of directed edges. Each element of the edge set,
$e_{ij}$, represents a directed edge from $i$ to $j$ if
$(i,j)\in \mathcal{E}$. A graph $\mathcal{G}$ is called
{\it undirected graph} if $(i,j)\in \mathcal{E}
\iff (j,i)\in \mathcal{E}$ and 
{\it directed graph} if the equivalence does not hold.
In this work, we are interested in \textit{directed graphs}.
The set of agents that have the $i$th
agent as the child vertex is 
$\mathcal{N}_i=\{j\in \mathcal{V} : (j,i)\in\mathcal{E}\}$ and we call it
neighborhood of the $i$th vertex, and
we call neighbor of $i$ an agent  $j$ that
belongs to the neighborhood $\mathcal{N}_i$.
The adjacency matrix
$\mathcal{A}=[a_{ij}]$ associated with the graph $\mathcal{G}$ is
defined by
\vspace{-0.15cm}
\begin{IEEEeqnarray*}{l}
a_{ij} = \left\{
\begin{matrix*}[l]
0, &\text{if } i=j \text{ or }  \nexists ~ (j,i)\in\mathcal{E}\\
1, & \text{if } (j,i)\in\mathcal{E}.
\end{matrix*}
\right.
\vspace{-0.15cm}
\end{IEEEeqnarray*}
We can represent the graph
$\mathcal{G}$ through the 
Laplacian matrix, defined by
$
\mathcal{L} = \mathcal{D} - \mathcal{A},
$
where 
the diagonal
degree matrix, $\mathcal{D}$,
has elements
$d_{ii}=\sum_{j=1}^N a_{ij}$.
\vspace{-0.23cm}
\subsection{Dynamical Network}
Consider $N$ agents with the following open-loop
dynamics
\small
\vspace{-0.5cm}
\begin{IEEEeqnarray}{l}
\label{eq:MA}
{\dot x}_i(t) = A x_i(t) + B\sat (u_i(t)) + Dw_i(t),
\text{ for } i=1,...,N,~
\vspace{-0.17cm}
\end{IEEEeqnarray}
\normalsize
where $x_i(t)\in \mathbb{R}^m$ is the state
variable of the $i$th agent, $u_i(t)\in \mathbb{R}^p$
is its input, $w_i(t)\in \mathbb{R}^q$ is an exogenous signal,
and $A$, $B$, and $D$ are the system 
matrices with appropriate dimensions. The function
$\sat(\cdot)$ is given by, 
\small
\vspace{-0.17cm}
\begin{IEEEeqnarray*}{l}
\sat(u_i(t))=[\sat(u_{i(1)}(t))~\cdots~
\sat(u_{i(p)}(t))]', \\
\sat(u_{i(r)}(t)) 
= \sign(u_{i(r)}(t))\min(u_{\max},\lvert u_{i(r)}(t)\lvert ),
\vspace{-0.15cm}
\end{IEEEeqnarray*}
\normalsize
for $r=1,...,p$, where the scalar $u_{\max}$ is 
the limit of the actuator.
This study investigates the
impact of saturation on networked systems
subject to stochastic
communication topologies and
input disturbances.
The time-varying interconnections are modeled by
a dependency of the elements of the communication
graph 
in a continuous-time Markov process 
$\{\theta_t:t\geq0\}$ over a finite set of states
$\mathcal{S}=\{1,\dots,s\}$, where the random variable
$\theta_t$ is the state of the Markov chain at time
$t$. This dependency is 
represented by $a_{ij}(\theta_t)$.
We consider a distributed control setting and adopt the following variant of the conventional consensus protocol \cite{Olfati-Saber2002}:
\small
\vspace{-0.15cm}
\begin{IEEEeqnarray}{l}
\label{eq:input}
u_i(t) =
- \sum_{j=1}^N a_{ij}(\theta_t)K\big(x_i(t)-x_j(t)\big),
\vspace{-0.12cm}
\end{IEEEeqnarray}
\normalsize
in which $K\in\mathbb{R}^{p\times m}$ is a
constant gain matrix.
The value
$a_{ij}(k)=1$ if, and only if, 
$(j,i)\in\mathcal{E}$ when $\theta_t = k$, for 
$k\in \mathcal{S}$.
An implicit assumption on $\theta_t$ in
equation (\ref{eq:input}) is that it models
accurately enough intermittent interactions in the
networked system, these 
parameters
can be computed using identification techniques \cite{Bremaud2020}.
In addition, 
 we work under the following assumption:
\begin{assumption}
\label{assump:21}
The union of graphs,
$\bigcup_{\theta_t\in \mathcal{S}}\mathcal{G}(\theta_t)$,
associated with the time-varying communication topology of the network has a directed spanning tree,
\textit{i.e.,} there is a node with a directed path to every other node in the network.
\end{assumption}

The Markov process 
is
defined by a time-varying transition rate
matrix $\Pi(t)=\left[
\pi_{ij}(t)\right], ~ i,j\in \mathcal{S}$,
which evolution is
described by the following infinitesimal 
generator,
\small
\vspace{-0.2cm}
\begin{IEEEeqnarray}{l}
\mathrm{Pr}\{\theta_{t+\Delta} = j \lvert \theta_t=i\}
= \left\{
\begin{matrix*}[l]
\pi_{ij}(t)\Delta + o(\Delta), & i\neq j,\\
1 + \pi_{ii}(t)\Delta + o(\Delta), & i=j,
\end{matrix*}
\right.
\vspace{-0.12cm}
\end{IEEEeqnarray}
\normalsize
where
$\mathrm{Pr}\{\theta_{t+\Delta} = j
\lvert \theta_t=i\}$ stands for the probability
of the next state be $\theta_{t+\Delta} = j$
given that the current state is $\theta_t=i$,
$\Delta > 0$ and $\lim_{\Delta \rightarrow 0}\frac{o(\Delta)}{\Delta}=0$,
$\pi_{ij}(t)\geq0$ for $i\neq j$ is the
transition rate from state $i$ to $j$, and
$\pi_{ii}(t) = -\sum_{j=1,j\neq i}^s \pi_{ij}(t)$.
In addition, the probability distribution of the
Markov chain is denoted by
$\mu = \{\mu_1,\dots,\mu_s\}$.
The time-varying transition matrix
$\Pi(t)$ is not precisely known,
but belonging to a
polytopic uncertainty domain denoted by
\small
\vspace{-0.35cm}
\begin{IEEEeqnarray}{l}
\label{eq:polytope}
\Pi(\alpha(t)) = \sum_{i=1}^r \alpha_i \Pi^i, 
~\alpha_i \in \Xi_r,
\vspace{-0.2cm}
\end{IEEEeqnarray}
\normalsize
where the unit simplex $\Xi_r$ is defined as
\small
$\Xi_r=\big\{{\boldsymbol \alpha}\in \mathbb{R}^r
:0\leq \alpha_i \leq 1, ~\sum_{i=1}^r\alpha_i=1
\big\},
$
\normalsize
in which $r$ is the number of
vertices of the polytope which are given by
known matrices $\Pi^i$ with elements
 $\pi_{jq}^i$, for $i\in\{1,...,r\}.$
 This approach allows us to robustly tackle the problem of time-varying
 and uncertain switching topologies. 

The $\LLD_2$
gain 
is traditionally used
to evaluate the disturbance rejection
of a system \cite{KHALIL2002ab}.\
However, for the multi-agent
system \eqref{eq:MA}, the $\LLD_2$ gain might not be well-defined, since 
sufficiently large disturbances
might lead to unbounded states 
 \cite{Guan2020,DeSouza2018}. 
To address this,
we focus on disturbances whose energy
is bounded, belonging to the following
set:
\small
\vspace{-0.2cm}
\begin{IEEEeqnarray*}{l}
\WW=\Bigg\{w_i\in \mathbb{R}^q: 
\int_0^{\infty}w_i(\tau)'w_i(\tau)d\tau < \rho
\Bigg\},
\vspace{-0.25cm}
\end{IEEEeqnarray*}
\normalsize
where $\rho$ is a positive constant 
representing the energy bound of the disturbance.
Our goal is to find the largest $\rho$
for which the trajectories of the networked agents remain bounded in the presence of switching topologies and input saturation,
by seeking the $\LLD_2$ gain as the ratio between a bounded output of interest and input limited in energy.
We provide a formal definition in Sec. \ref{subsubsec:estimationL2}.\

Now, formalize the
problem
investigated in this work:
\begin{definition}
\label{def:consensus}
Under stochastic switching topology, the
networked system (\ref{eq:MA}) in closed-loop
with (\ref{eq:input}), for 
all $i\in\mathcal{V}$, reaches
the mean-square consensus if, for all $i\neq j$,
$\lim_{t\rightarrow \infty} \mathbb{E}\big( \lVert x_i(t)
-x_j(t)\lVert^2\big)\rightarrow 0$ holds in the mean square
sense, for any initial distribution $\mu$.
\end{definition}
Here, $\mathbb{E}(\cdot)$ stands for the mathematical
expectation.


\vspace{-0.25cm}
\subsection{Mean-Square consensus as a stability problem}

We translate the consensus problem into a stability analysis and derive sufficient conditions that guarantee the stabilization of the equivalent system.
Consider the following disagreement transformation
 \cite{YuanGongSun2009} and its equivalent stacked form,
 \small
\vspace{-0.3cm}
\begin{IEEEeqnarray}{lll}
z_i(t)~&=&~x_1(t)-x_{i+1},~\text{for}~i=1,...,N-1,
\nonumber \\
\label{eq:disagreement}
\boldsymbol{z}(t)~&=&~(U\otimes I_m){\boldsymbol x}(t),
\vspace{-0.2cm}
\end{IEEEeqnarray}
\normalsize
where $\otimes$ denotes the Kronecker
product, $I_m$ is an identity matrix,
${\boldsymbol z}(t) = [z_1(t)' ~\cdots~ 
z_{N-1}(t)']'$, 
${\boldsymbol x}(t) = [x_1(t)' ~\cdots~
~x_{N}(t)']'$,  and 
$U=[{\boldsymbol 1}_{N-1} ~ -I_{N-1}]$ with
${\boldsymbol 1}_{N-1}$ being an 
all-ones vector
of size $N-1$. The transformation from the
disagreement variables ${\boldsymbol z}(t)$ back to
the stacked agent states can be computed by
\small
\vspace{-0.23cm}
\begin{IEEEeqnarray}{l}
\label{eq:MA_back_stacked}
\boldsymbol{x}(t) =
{\boldsymbol 1}_{N}\otimes {x}_{1}(t) + 
(W\otimes I_m){\boldsymbol z}(t),
\vspace{-0.2cm}
\end{IEEEeqnarray}
\normalsize
in which 
$ 
W = \begin{bmatrix}
{\boldsymbol 0}_{N-1} & -I_{N-1}
\end{bmatrix}',
$
and ${\boldsymbol 0}_{N-1}$ is an all-zeros vector
of size $N-1$.
The disagreement variable
${\boldsymbol z}(t)$ provides the errors between
a pivot agent and all other agents.
Such a transformation allows us to
reformulate the consensus
problem in Definition \ref{def:consensus} 
according to the following lemma:
\begin{lemma}[\cite{YuanGongSun2009}]
\label{lemma:disagreement}
The multi-agent 
system (\ref{eq:MA}), in closed-loop
with the control law (\ref{eq:input}), asymptotically
reaches the mean-square consensus if, and only
if, 
$
\lim_{t\rightarrow \infty}
\mathbb{E}\big(
\lVert{z}_i(t)\lVert^2
\big)\rightarrow 0,
$
for all $ i=1,\dots,N-1$.
\end{lemma}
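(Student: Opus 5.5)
The plan is to reduce the lemma to an elementary equivalence between the pairwise disagreements $x_i-x_j$ and the pivot disagreements $z_k=x_1-x_{k+1}$. Linearity of the maps in (\ref{eq:disagreement}) and (\ref{eq:MA_back_stacked}) does all the work, together with the norm inequality $\lVert a+b\rVert^2\le 2\lVert a\rVert^2+2\lVert b\rVert^2$ and linearity and monotonicity of $\mathbb{E}(\cdot)$. No property of the closed-loop dynamics, the saturation, or the Markov process $\theta_t$ is needed. The equivalence holds pathwise for every realization, so it holds for any initial distribution $\mu$.

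\emph{Necessity.} Suppose mean-square consensus holds in the sense of Definition \ref{def:consensus}. Each $z_i(t)=x_1(t)-x_{i+1}(t)$ is itself one of the pairwise differences in that definition, with the pair $(1,i+1)$ and $1\neq i+1$. Hence $\lim_{t\to\infty}\mathbb{E}(\lVert z_i(t)\rVert^2)=0$ for all $i=1,\dots,N-1$ immediately.

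\emph{Sufficiency.} Suppose $\mathbb{E}(\lVert z_k(t)\rVert^2)\to 0$ for every $k$. Fix $i\neq j$.
\begin{itemize}
\item If one index equals $1$, say $i=1$, then $x_1-x_j=z_{j-1}$ and the claim follows directly.
\item Otherwise, write $x_i-x_j=(x_1-x_j)-(x_1-x_i)=z_{j-1}-z_{i-1}$. This can equivalently be read off from (\ref{eq:MA_back_stacked}), since $x_i=x_1-z_{i-1}$ for $i\ge 2$. Then
\[
\mathbb{E}\big(\lVert x_i(t)-x_j(t)\rVert^2\big)\le 2\,\mathbb{E}\big(\lVert z_{j-1}(t)\rVert^2\big)+2\,\mathbb{E}\big(\lVert z_{i-1}(t)\rVert^2\big),
\]
and both terms on the right tend to zero.
\end{itemize}
A compact alternative covers all pairs at once. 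One can note that $(I_N-\tfrac1N\mathbf{1}_N\mathbf{1}_N')\otimes I_m$ applied to $\boldsymbol x$ equals $-(\Omega\otimes I_m)\boldsymbol z$ for a fixed matrix $\Omega$. This gives $\sum_{i,j}\lVert x_i-x_j\rVert^2\le c\lVert\boldsymbol z\rVert^2$ for a constant $c$ depending only on $N$.

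The only step needing any care is the finiteness of the expectations involved. I would either assume the second moments exist, which is implicit in Definition \ref{def:consensus}, or note that the inequalities hold pathwise and so pass to expectations even in the extended sense. I do not expect a genuine obstacle: the proof is a change of coordinates argument with a constant-factor norm bound.
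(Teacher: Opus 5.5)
Your proof is correct and follows essentially the argument the paper relies on. The paper gives no proof of its own: it cites \cite{YuanGongSun2009} and rests the equivalence on the invertible linear change of coordinates (\ref{eq:disagreement})--(\ref{eq:MA_back_stacked}), which is exactly what you use via $x_i-x_j=z_{j-1}-z_{i-1}$ and $\lVert a-b\rVert^2\le 2\lVert a\rVert^2+2\lVert b\rVert^2$, and you are right that the identities hold pathwise, hence for every initial distribution $\mu$ and independently of the saturation and the switching.
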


We compute the equivalent multi-agent system by expressing (\ref{eq:MA}) in stacked form with the closed-loop control law (\ref{eq:input}) as follows,
\small
\vspace{-0.20cm}
\begin{IEEEeqnarray}{ll}
\label{eq:MA_stacked}
{\dot {\boldsymbol x}}(t) =\ &\big(I_N \otimes A\big)
{\boldsymbol x}(t)
-(I_N\otimes B)\sat\big((\mathcal{L}_{\theta_t} \otimes K) 
{\boldsymbol x}(t)\big)
\nonumber \\
&+\big(I_{N} \otimes D \big)
{\boldsymbol w}(t)
,
\vspace{-0.15cm}
\end{IEEEeqnarray}
\normalsize
where ${\w}(t) = [w_1(t)' ~\cdots~ 
w_{N}(t)']'$ and
$\mathcal{L}_{\theta_t}$ is a shorthand 
for
$\mathcal{L}{(\theta_t)}$ which 
represents the
Laplacian matrix associated to the current state of
the
Markov process $\{\theta_t:t\geq 0\}$. Taking
the time-derivative of (\ref{eq:disagreement}) 
and substituting ${\dot {\boldsymbol x}}(t)$ by
(\ref{eq:MA_stacked}) gives 
\small
\vspace{-0.28cm}
\begin{IEEEeqnarray*}{rl}
{\dot {\boldsymbol z}}(t) =~&
(U\otimes I_m) \Big[ \big(I_N \otimes A\big)
{\boldsymbol x}(t)
-(I_N\otimes B)
\nonumber \\
&\times\sat\big((\mathcal{L}_{\theta_t}
\otimes K )
{\boldsymbol x}(t)
\big)
+\big(I_{N} \otimes D \big)
{\boldsymbol w}(t)
\Big]
,
\vspace{-0.17cm}
\end{IEEEeqnarray*}
\normalsize
then, considering ${\boldsymbol x}(t)$ 
in (\ref{eq:MA_back_stacked}) and
applying the Kronecker identity 
$\big(M\otimes N)\big(P\otimes Q\big)=
\big(MP\otimes NQ)$, for matrices $M,~N,~P,$ and
$Q$ with appropriate dimensions, we have
\small
\vspace{-0.17cm}
\begin{IEEEeqnarray*}{rll}
{\dot {\boldsymbol z}}(t) 
=~&\Bigr(
U{\boldsymbol 1}_{N}\otimes A{x}_{1}(t) 
+
(UW\otimes A){\boldsymbol z}(t)
\Bigr)
\nonumber \\
&-
(U\otimes B) \sat
\Bigr(
\big(\mathcal{L}_{\theta_t} {\boldsymbol 1}_{N} 
\otimes K{x}_{1}(t) \big)
+
(\mathcal{L}_{\theta_t}W\otimes K){\boldsymbol z}(t)
\Bigr)
\nonumber \\
&+\big(U \otimes D \big)
{\boldsymbol w}(t).
\vspace{-0.15cm}
\end{IEEEeqnarray*}
\normalsize
Finally, 
noticing that
$U{\boldsymbol 1}_N=0$,
$\mathcal{L}_{\theta_t}{\boldsymbol 1}_N=0$,
and $UW=I_{N-1}$, we have
\small
\vspace{-0.22cm}
\begin{IEEEeqnarray}{rl}
{\dot {\boldsymbol z}}(t) 
=~&
\big(
I_{N-1}\otimes A
\big){\boldsymbol z}(t)
-
\big(U\otimes B\big)
\sat\Bigr(\big(\mathcal{L}_{\theta_t}W\otimes K)
{\boldsymbol z}(t)\Bigr)
\nonumber \\
& +\big(U \otimes D \big)
{\boldsymbol w}(t)
.
\label{eq:stability}
\vspace{-0.2cm}
\end{IEEEeqnarray}
\normalsize
Hence, we study the consensus problem of the multi-agent system
(\ref{eq:MA})
by analysing the stability of
(\ref{eq:stability}), where agents' coordinates are transformed to a relative coordinate system.

The non-linearity produced 
by the saturation satisfies a sector condition,
which allows us to analyze the dynamics of the multi-agent system as a Lur'e problem \cite{KHALIL2002ab}.
To do so, we can use the following dead-zone function
\cite{Tarbouriech2011},
\small
\vspace{-0.15cm}
\begin{IEEEeqnarray}{ll}
\label{eq:phi_sat}
\Phi(\bu (t))= \bu (t) - \sat(\bu(t)),
\vspace{-0.1cm}
\end{IEEEeqnarray}
\normalsize
with
$\bu(t) = (\mathcal{L}_{\theta_t}W\otimes K)
{\boldsymbol z}(t).$
Then, by summing and subtracting
$(U\otimes B)\bu(t)$ in equation
(\ref{eq:stability}) we get
\small
\vspace{-0.2cm}
\begin{IEEEeqnarray}{rl}
{\dot {\boldsymbol z}}(t) 
=~&
\big(
I_{N-1}\otimes A
\big){\boldsymbol z}(t)
-
\big(U\mathcal{L}_{\theta_t}W\otimes BK)
{\boldsymbol z}(t) 
\label{eq:stab_sect}
\\
&+
\big(U\otimes B\big)
\Phi\Bigr(\big(\mathcal{L}_{\theta_t}W\otimes K)
{\boldsymbol z}(t)\Bigr)
+\big(U \otimes D \big)
{\boldsymbol w}(t).
\nonumber 
\vspace{-0.35cm}
\end{IEEEeqnarray}
\normalsize
With the multi-agent system represented 
as in (\ref{eq:stab_sect}), we can define the 
following polyhedral set together with a sector
condition.
\begin{lemma}[Generalised Sector
Condition \cite{Tarbouriech2006}]
  For a given auxiliary signal $\boldsymbol{ \vartheta }(t)\in\mathbb{R}^{Np}$ and saturation
  limits $u_{\max}$, define the set
  \small
\vspace{-0.3cm}
\begin{IEEEeqnarray}{ll}
 \mathbb{S}(\boldsymbol{\vartheta}(t),u_{\max})= \big\{&\boldsymbol{u}(t)\in \mathbb{R}^{Np}:
 \lvert (\boldsymbol{u}(t) 
 - \boldsymbol{\vartheta}(t))_{(q)}
 \lvert \leq u_{\max},
 \nonumber \\
 &\text{for } q = 1,...,Np\big\}.
\vspace{-0.34cm}
\end{IEEEeqnarray}
\normalsize
If
${\boldsymbol{u}}(t)$
belongs to 
$\mathbb{S}(\boldsymbol{\vartheta}(t),u_{\max})$,
then
$\Phi \big( {\boldsymbol{u}}(t)\big)' T
\big[\Phi \big({\boldsymbol{u}}(t)) 
+ \boldsymbol{\vartheta}(t) \big] \leq 0 $
is
satisfied for any positive diagonal matrix 
$T\in \mathbb{R}^{Np\times Np}$,
where
$
\Phi\big( {\boldsymbol{u}}(t)\big)
$
is given by \eqref{eq:phi_sat}.
\label{lemma:gensec}
\end{lemma}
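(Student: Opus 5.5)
The approach is to reduce the matrix inequality to a scalar one, component by component, and settle each component with a three-case analysis on the saturation. One sign convention must be fixed first. The inequality as worded, $\Phi(\bu)'T[\Phi(\bu)+\boldsymbol{\vartheta}]\le 0$ under $\lvert(\bu-\boldsymbol{\vartheta})_{(q)}\rvert\le u_{\max}$, is the generalised sector condition of \cite{Tarbouriech2006}. It holds when the dead-zone is taken as $\sat(\bu)-\bu$. With \eqref{eq:phi_sat} read literally as $\bu-\sat(\bu)$ it fails: for a scalar $u>u_{\max}$ and $\vartheta=u-u_{\max}>0$ the left-hand side equals $t\,(u-u_{\max})\cdot 2(u-u_{\max})>0$. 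I will therefore prove the statement exactly as worded, with the dead-zone read in the cited convention $\Phi(\bu)=\sat(\bu)-\bu$. Under \eqref{eq:phi_sat} literally, the same argument gives the mirrored inequality with $\boldsymbol{\vartheta}$ replaced by $-\boldsymbol{\vartheta}$.

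Since $T=\diag(t_1,\dots,t_{Np})$ with every $t_q>0$, we have
\begin{equation*}
\Phi(\bu)'T[\Phi(\bu)+\boldsymbol{\vartheta}]=\sum_{q=1}^{Np} t_q\,\Phi_{(q)}\big(\Phi_{(q)}+\vartheta_{(q)}\big).
\end{equation*}
It therefore suffices to show $\Phi_{(q)}(\Phi_{(q)}+\vartheta_{(q)})\le 0$ for each $q$. This uses only the scalar hypothesis $\lvert u_{(q)}-\vartheta_{(q)}\rvert\le u_{\max}$ and the fact that $\sat$ acts componentwise.

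For fixed $q$, write $u=u_{(q)}$ and $\vartheta=\vartheta_{(q)}$, and split into three cases.
\begin{itemize}
\item If $\lvert u\rvert\le u_{\max}$, then $\Phi_{(q)}=0$ and the term vanishes.
\item If $u>u_{\max}$, then $\Phi_{(q)}=u_{\max}-u<0$. Also $\Phi_{(q)}+\vartheta=u_{\max}-(u-\vartheta)\ge 0$, because $u-\vartheta\le u_{\max}$. The product is therefore $\le 0$.
\item If $u<-u_{\max}$, then $\Phi_{(q)}=-u_{\max}-u>0$. Also $\Phi_{(q)}+\vartheta=-u_{\max}-(u-\vartheta)\le 0$, because $u-\vartheta\ge -u_{\max}$. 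Again the product is $\le 0$.
\end{itemize}
Summing these nonpositive terms with the positive weights $t_q$ gives the claim.

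The only real obstacle is the sign bookkeeping. The membership condition $\bu\in\mathbb{S}(\boldsymbol{\vartheta},u_{\max})$ must be paired with the matching orientation of the dead-zone, and in each saturated case the one-sided half of $\lvert u-\vartheta\rvert\le u_{\max}$ used is the one that controls the sign of $\Phi_{(q)}+\vartheta$. Once the convention is fixed, every remaining step is immediate.
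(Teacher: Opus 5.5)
Your proof is correct and matches the standard one. The paper gives no proof of its own here, only the citation to \cite{Tarbouriech2006}, where the generalised sector condition is proved the same way: reduce to scalar components (using that $T$ is diagonal and $\sat$ acts componentwise), then do the three-case analysis on saturation.

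Your sign diagnosis is also correct, and it matters. Read \eqref{eq:phi_sat} literally and take $\bu\in\mathbb{S}(\boldsymbol{\vartheta},u_{\max})$. Whenever $u_{(q)}>u_{\max}$ (resp.\ $u_{(q)}<-u_{\max}$), both $\Phi_{(q)}$ and $\Phi_{(q)}+\vartheta_{(q)}$ are strictly positive (resp.\ strictly negative). So the literal statement fails whenever any component saturates, not only in your example.

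One precision on your closing remark. For $\Phi(\bu)=\bu-\sat(\bu)$, the valid form is $\Phi'T[\Phi-\boldsymbol{\vartheta}]\le 0$ on the same set $\mathbb{S}(\boldsymbol{\vartheta},u_{\max})$. Equivalently, it is $\Phi'T[\Phi+\boldsymbol{\vartheta}]\le 0$ on $\mathbb{S}(-\boldsymbol{\vartheta},u_{\max})$. Flipping the sign of $\boldsymbol{\vartheta}$ in both places at once just gives back the false literal statement.

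This also affects the paper downstream. The proof of Theorem~\ref{theorem:1} uses exactly the literal pairing: the block $-TG$ in $\Psi$ together with $-X_{\ell(q)}$ in \eqref{eq:inq_sat}. One of those two signs must be flipped for that argument to hold.
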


Since it might
not be possible to achieve
convergence from every 
initial conditions
for a system subject to
input saturation
\cite{Sussmann1994,Schmitendorf1980},
the state space under stabilizing linear feedback control can be partitioned into a region that leads to convergence and other that may lead to divergence. Therefore, we define the region of convergence as follows.
\begin{definition}
The region of convergence
is a subset of the state space,
$\mathscr{C}\subseteq
\mathbb{R}^{Nm}$,
such that for any initial conditions
starting
from $\mathscr{C}$, the 
multi-agent system without 
disturbances attain the mean-square
consensus. Namely,
\small
\vspace{-0.24cm}
\begin{IEEEeqnarray}{ll}
\mathscr{C}=\big\{&\boldsymbol x(0)
\in\mathbb{R}^{Nm}:
\lim_{t\rightarrow \infty}
\mathbb{E}\big(
\lVert
x_i(t)-x_j(t)\lVert^2\big)\rightarrow 0,
\nonumber \\ 
&\forall
i,j\in \mathcal{V}\big\}.
\end{IEEEeqnarray}
\normalsize
\end{definition}
Exact characterization
the region of convergence
is a challenging,
even for single systems
\cite{Tarbouriech2011}.
We address this issue 
by defining a convex
optimization problem 
to compute 
a subset of 
$\mathscr{C}$ less
conservative as possible.

\vspace{-0.32cm}
\section{Main Results}
\vspace{-0.2cm}
\label{sec:3}
In this section, we derive stochastic stability criteria
for the disagreement system (\ref{eq:stability}).
In addition, we tackle the following problems:
i) the disturbance tolerance from the
consensus point,
ii) the disturbance tolerance
from arbitrary initial
conditions,
iii) the $\LLD_2$ gain,
and
iv) the maximization of the
region of convergence and the
region with guaranteed bounded states for 
the agents.

\begin{theorem}
\label{theorem:1}
Let the multi-agent system
(\ref{eq:MA}) in closed-loop with the consensus protocol
(\ref{eq:input})
subject to input saturation and stochastic
time-varying 
communication topology be represented by
equation (\ref{eq:stab_sect}).
For given positive scalars $\rho$, $\eta$, and
time-varying transition matrix $\Pi(t)$,
if there exist symmetric
positive definite
matrices 
$S\in\mathbb{R}^{Np\times Np}$,
$Y_{\ell}\in\mathbb{R}^{m(N-1)
\times m(N-1)}$,
and matrices 
$X_{\ell}\in\mathbb{R}^{Np\times m(N-1)}$
for all $\ell\in\{1,...,s\}$,
such that the following
matrix inequalities hold 
for all $i\in\{1,\dots,r\}$:
\small
\vspace{-0.2cm}
\begin{IEEEeqnarray}{l}
\label{eq:main_theorem1}
\begin{bmatrix}
\Lambda & * & * & * \\
S\big(U\otimes B\big)'-X_{\ell} & -2S & * & * \\
\sqrt{\rho}(U \otimes D \big)' & 0 & 
\frac{1-\gamma}{\gamma}I & * \\
R_{\ell}' & 0 & 0 & -Q_{\ell}
\end{bmatrix}
\leq 0,
\end{IEEEeqnarray}
\vspace{-0.5cm}
\begin{IEEEeqnarray}{ll}
\label{eq:inq_sat}
\begin{bmatrix}
Y_{\ell} & * \\
(\LL_{\ell}W\otimes K )_{(q)}Y_{\ell}-X_{\ell(q)} 
&
u_{\max}^2 \gamma
\end{bmatrix} \geq 0, 
\vspace{-0.1cm}
\end{IEEEeqnarray}
for $q=1,...,Np,$
where  
\small
$Y_{\ell}=P_{\ell}^{-1}$,
\small
\vspace{-0.5cm}
\begin{IEEEeqnarray*}{l} 
\Lambda=\mathrm{He}\Bigr(\big(
I_{N-1}\otimes A
-U\mathcal{L}_{\ell}W
\otimes BK\big)Y_{\ell}\Bigr)
+\pi_{\ell \ell}^i
Y_{\ell},
\\
R_{\ell}=\big[\sqrt{\pi_{\ell 1}^i}Y_{\ell}~\cdots~
 \sqrt{\pi_{\ell (\ell-1)}^i}Y_{\ell} 
 ~\sqrt{\pi_{\ell(\ell+1)}^i}Y_{\ell}
 ~\cdots~\sqrt{\pi_{\ell s}^i}Y_{\ell}\big],
 ~\text{and}
 \nonumber \\
Q_{\ell}=\diag\big( Y_{1},~\cdots,
 ~Y_{\ell-1},
 ~Y_{\ell+1},~\cdots
 ~Y_{s}\big),
\vspace{-0.2cm}
\end{IEEEeqnarray*}
\normalsize
then:
\vspace{-0.05cm}
\begin{itemize}
    \item[(i)] every trajectory of the multi-agent
system starting from the region
$\RR(\z(t),1)$
remains within
$\RR(\z(t),\gamma^{-1})$
for all $t\geq 0 $,
with 
$\gamma = 1/(1+N\rho \eta)$; 
    \item[(ii)] every trajectory of the multi-agent system
starting from the origin will remain within
the region
$\RR(\z(t),\gamma^{-1})$
for all $t\geq 0$,
with 
$\gamma = 1/(N\rho \eta)$; 
 and
\item[(iii)] in absence of disturbances,
$\w(t) = 0$,
the region
$\RR(\z(t),1)$
is an estimate included in 
the region of convergence $\mathscr{C}$,
and
the multi-agent system attains the mean-square
consensus asymptotically.
\end{itemize}
Moreover,
the set $\RR(\z(t),\sigma)$,
for constant $\sigma$,
is defined as
\small
\vspace{-0.23cm}
\begin{IEEEeqnarray}{l}
\RR(\z(t),\sigma)=\bigcap_{\ell\in \cS}
\EE(P_\ell,\sigma),
\text{ with}
\nonumber \\
\EE(P_\ell,\sigma)=
\big\{\z(t)\in\mathbb{R}^{m(N-1)}:\z(t)'P_{\ell}\z(t)\leq
\sigma \big\}.
\quad
\end{IEEEeqnarray}
\normalsize
\end{theorem}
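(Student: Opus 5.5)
We use the mode-dependent stochastic Lyapunov function $V(\z(t),\theta_t)=\z(t)'P_{\theta_t}\z(t)$ with $P_\ell=Y_\ell^{-1}>0$, together with the generalised sector condition of Lemma~\ref{lemma:gensec}. The target is the dissipation inequality
\[
\mathscr{L}V(\z,\ell) + 2\Phi'T[\Phi+\boldsymbol\vartheta] - \tfrac{\gamma}{(1-\gamma)\rho}\,\w'\w \le 0 ,
\]
where $\mathscr{L}$ is the weak infinitesimal generator of the Markov jump process. Since the transition matrix is a convex combination of the vertices, $\Pi(\alpha)=\sum_i\alpha_i\Pi^i$, the jump term $\sum_j\pi_{\ell j}(t)\z'P_j\z$ is affine in $\alpha$. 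It therefore suffices to check the inequality at each vertex $\Pi^i$, which is why \eqref{eq:main_theorem1} is imposed for all $i\in\{1,\dots,r\}$.

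\textbf{Step 1: recover the dissipation inequality from \eqref{eq:main_theorem1}.} Take the auxiliary signal $\boldsymbol\vartheta=G_\ell\z$ with $G_\ell=X_\ell Y_\ell^{-1}$, and set $T=S^{-1}$. Pre- and post-multiply \eqref{eq:main_theorem1} by $\diag(P_\ell,T,I,I)$. Then apply a Schur complement to the last block: $R_\ell Q_\ell^{-1}R_\ell'=\sum_{j\neq\ell}\pi^i_{\ell j}Y_\ell P_jY_\ell$, which becomes $\sum_{j\neq\ell}\pi^i_{\ell j}P_j$ after the congruence. Together with $\pi_{\ell\ell}P_\ell$ this gives the full jump term. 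A further Schur complement on the disturbance block yields the $\w$-term. Evaluating the resulting quadratic form along $[\z;\Phi;\w]$ of \eqref{eq:stab_sect} gives exactly the inequality above.

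\textbf{Step 2: enforce the sector condition and invariance via \eqref{eq:inq_sat}.} Multiplying \eqref{eq:inq_sat} by $\diag(P_\ell,1)$ and taking Schur complements shows $\EE(P_\ell,\gamma^{-1})\subseteq\mathbb{S}(G_\ell\z,u_{\max})$. The inequality gives $|((\LL_\ell W\otimes K)-G_\ell)_{(q)}\z|^2\le u_{\max}^2\gamma\,\z'P_\ell\z\le u_{\max}^2$ whenever $\z'P_\ell\z\le\gamma^{-1}$. Hence on $\RR(\z,\gamma^{-1})$ the sector term is nonpositive, and Step 1 gives $\mathscr{L}V\le \tfrac{\gamma}{(1-\gamma)\rho}\w'\w$.

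\textbf{Step 3: integrate for (i)–(iii).} By Dynkin's formula, for trajectories that stay in $\RR(\z,\gamma^{-1})$,
\[
\mathbb{E}V(t)\le V(0)+\tfrac{\gamma}{(1-\gamma)\rho}\int_0^t\w'\w\,d\tau .
\]
Since $\w$ stacks $N$ agents each with energy below $\rho$, the integral is below $N\rho$. Choosing the disturbance weight consistently with the parameter $\eta$, the right-hand side is $V(0)+N\rho\eta$.

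For (i), $V(0)\le1$ gives a bound of $1+N\rho\eta=\gamma^{-1}$. For (ii), $V(0)=0$ gives a bound of $N\rho\eta=\gamma^{-1}$. A standard contradiction argument on the first exit time then shows the trajectory never leaves $\RR(\z,\gamma^{-1})$.

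For (iii), set $\w=0$. The strict version of the inequality gives $\mathscr{L}V\le-\epsilon\|\z\|^2$, so $\mathbb{E}\|\z\|^2\to0$ on $\RR(\z,1)$, and Lemma~\ref{lemma:disagreement} yields mean-square consensus.

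\textbf{Main obstacle.} The delicate point is the Markov switching. The Lyapunov level set depends on the mode, so invariance has to be argued for the intersection $\RR$ and in expectation rather than pathwise. I would handle this by bounding $V$ at the stopping time at which the trajectory leaves $\bigcap_\ell\EE(P_\ell,\gamma^{-1})$. Matching the exact constant $\eta$ in the $\w$-weight, the $\frac{1-\gamma}{\gamma}$ block, requires careful bookkeeping but is routine.
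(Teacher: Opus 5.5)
Your architecture is the paper's. It uses the mode-dependent $V=\z'P_\ell\z$, the generalised sector condition with $\boldsymbol{\vartheta}=G_\ell\z$, the congruence with $Y_\ell=P_\ell^{-1}$, $S=T^{-1}$, $X_\ell=G_\ell Y_\ell$, a Schur complement for the jump term, the vertex argument and Dynkin's formula. Your Step 2 is exactly the paper's argument that \eqref{eq:inq_sat} gives $\EE(P_\ell,\gamma^{-1})\subseteq\mathbb{S}(G_\ell\z,u_{\max})$.

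\textbf{Gap: the constant in Step 3.} The weight on $\w'\w$ is not something you can choose \emph{consistently with $\eta$}; it is dictated by the disturbance block of \eqref{eq:main_theorem1}. As printed, that block is $+\frac{1-\gamma}{\gamma}I>0$, so it must be read with a minus sign. Evaluating the form at $[\z;\Phi;\w/\sqrt{\rho}]$ then gives the weight $\frac{1-\gamma}{\gamma\rho}$; you inverted the ratio. In case (i) this weight equals $N\eta$, so Dynkin yields $\mathbb{E}V<V(0)+N^2\rho\eta$. With $V(0)\le1$ this bound is larger than $\gamma^{-1}=1+N\rho\eta$ for every $N\ge2$. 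Your inverted weight gives $V(0)+1/(\rho\eta)$, which does not match either, so no bookkeeping closes this.

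The paper brings $\eta$ in at a different point. Before forming the matrix it applies Young's inequality $2\z'P_\ell(U\otimes D)\w\le\eta^{-1}\z'P_\ell(U\otimes D)(U\otimes D)'P_\ell\z+\eta\w'\w$. Hence $\DD V\le\eta\w'\w$ on the sector region, and $\mathbb{E}V<V(0)+N\rho\eta$ holds by construction. The LMI this derivation actually produces has $(U\otimes D)'$ against $-\eta I$, with $\gamma$ entering only through \eqref{eq:inq_sat}; compare the paper's specialisation \eqref{eq:opt_rho_zero} at $\eta=1$. Your dissipation-inequality route works once the disturbance block is taken in that form.

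Also fix the sign of the multiplier term. It must enter your target inequality as $-2\Phi'T[\Phi+\boldsymbol{\vartheta}]$, which is what the $-2S$ block produces after congruence. With a plus sign, nonpositivity of $\Phi'T[\Phi+\boldsymbol{\vartheta}]$ gives you nothing in Step 2.

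\textbf{Two steps asserted rather than proved.} The paper's proof does not supply these either.

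For (iii), the LMIs are non-strict, so there is no \emph{strict version} to invoke. The only margin comes from the disturbance block. With $\w=0$ it gives $\DD V\le-c\|(U\otimes D)'P_\ell\z\|^2$ on the sector region, for some $c>0$. This is $\le-\epsilon\|\z\|^2$ when $DD'>0$, since $UU'=I_{N-1}+\boldsymbol{1}_{N-1}\boldsymbol{1}_{N-1}'>0$. Otherwise you have only $\DD V\le0$. That is all the paper uses, and it does not give $\mathbb{E}\|\z\|^2\to0$.

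For (i)--(ii), Dynkin's formula at a first-exit time bounds only $\mathbb{E}V$, and $V(\z,\theta_t)$ jumps at mode switches. So no contradiction argument yields the pathwise claim that every trajectory stays in $\RR(\z,\gamma^{-1})$. Indeed, in the paper's own example $\LL_2=0$ while $A$ has an eigenvalue $\approx0.097>0$. With positive probability, a long sojourn in mode 2 carries a generic initial condition out of any bounded set, whatever the LMIs say. What your argument actually delivers is the expectation bound up to the exit time. The paper reaches its pathwise wording only by assuming $\bu\in\mathbb{S}$ along the whole trajectory. So the obstacle you flagged is real, and neither proof resolves it.
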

\begin{proof}
Consider the following
stochastic 
Lyapunov
candidate functional:
\small
\vspace{-0.2cm}
\begin{IEEEeqnarray}{l}
\label{eq:Lfunction}
V({\z}(t),\theta_t=\ell)=\z(t)'
P_{\ell}\z(t), \text{ with }
\ell\in\cS,
\vspace{-0.23cm}
\end{IEEEeqnarray}
\normalsize
where $P_{\ell}$ is a
constant positive definite matrix for each $\theta_t \in \cS$.
Let $\DD$ 
be the weak infinitesimal generator of the random
process $\{\theta_t:t\geq 0\}$.
Following
\cite{Ji1990,Bharucha-Reid1968},
the
difference of (\ref{eq:Lfunction}) along
the trajectories of (\ref{eq:stab_sect}) yields
\small
\vspace{-0.2cm}
\begin{IEEEeqnarray*}{rl}
\DD V({\z}(t))
=~&
\lim_{\Delta\rightarrow 0}
\frac{1}{\Delta}
\left[
V({\z}(t+\Delta),\theta_{t+\Delta})
-V({\z}(t),\theta_t=\ell)
\right]
\nonumber \\
=~&
2{\dot \z}(t)'P_\ell \z(t) 
+\z(t)'\sum_{j=1}^s\pi_{\ell j}(\alpha(t))
P_{j}\z(t)
\nonumber \\
=~& 2{\z}(t)'P_\ell\Big[
\big(
I_{N-1}\otimes A
-
U\mathcal{L}_{\ell}W\otimes BK\big)
{\boldsymbol z}(t) 
\nonumber \\
&+
\big(U\otimes B\big)
\Phi\Bigr(\big(\mathcal{L}_{\ell}W\otimes K)
{\boldsymbol z}(t)\Bigr)
+\big(U \otimes D \big)
{\boldsymbol w}(t) \Big]
\nonumber \\
&+\z(t)'\sum_{j=1}^s\pi_{\ell j}(\alpha(t))
P_{j}\z(t),
\vspace{-0.2cm}
\end{IEEEeqnarray*}
\normalsize
noting that
\small
$
2{\z}(t)'P_\ell\big (U \otimes D  \big)\w(t)
\leq 
\frac{1}{\eta}{\z}(t)'P_\ell\big(U \otimes D \big)
 (U \otimes D \big)'P_\ell\z(t)
+  \eta  \w(t)'\w(t),
$
\normalsize
with $\eta>0$, we have
\small
\vspace{-0.2cm}
\begin{IEEEeqnarray}{ll}
\label{eq:Lyap_deriv}
\DD V(\z(t))
\leq~& 2{\z}(t)'P_\ell\Big[
\big(
I_{N-1}\otimes A
-
U\mathcal{L}_{\ell}W\otimes BK\big)
{\boldsymbol z}(t) 
\\
&+
\big(U\otimes B\big)
\Phi\Bigr(\big(\mathcal{L}_{\ell}W\otimes K)
{\boldsymbol z}(t)\Bigr) 
\Big]
\nonumber \\
&
+\frac{1}{\eta}{\z}(t)'P_\ell\big(U \otimes D \big)
 (U \otimes D \big)'P_\ell\z(t)
 \nonumber \\
&+ \eta  \w(t)'\w(t)
+\z(t)'\sum_{j=1}^s\pi_{\ell j}(\alpha(t))
P_{j}(\alpha(t))\z(t).
\nonumber 
\vspace{-0.2cm}
\end{IEEEeqnarray}
\normalsize

For
the sake of continuity,
we assume that $\bu(t)\in 
\mathbb{S}(\boldsymbol{\vartheta}(t),u_{\max})$,
a sufficient condition for this
assumption is given in the
sequence.
As a result, the inequality 
$
-2\Phi \big( {\boldsymbol{u}}(t)\big)' T
\big[\Phi \big({\boldsymbol{u}}(t)) 
+ \boldsymbol{\vartheta}(t) \big] \geq 0
$
holds, in accordance with Lemma 
\ref{lemma:gensec}.
Considering the auxiliary signal
${\boldsymbol \vartheta}(t)=G\z(t)$ and adding the
previous inequality to equation 
\eqref{eq:Lyap_deriv} gives,
\small
\vspace{-0.2cm}
\begin{IEEEeqnarray}{ll}
\label{eq:aux_theorem2}
\DD V(\z(t))
\leq~& \chi(t)'\Psi\chi(t)+\eta\w(t)'\w(t),
\vspace{-0.15cm}
\end{IEEEeqnarray}
\normalsize
with $\chi(t)' =
[\z(t)'~\Phi(\bu(t))']$ and
\small
\vspace{-0.15cm}
\begin{IEEEeqnarray*}{l}
\Psi=
\begin{bmatrix}
\Omega^1 & * \\
\big(U\otimes B\big)'P_{\ell}-TG & -2T
\end{bmatrix}.
\vspace{-0.15cm}
\end{IEEEeqnarray*}
\normalsize
With \small
$
\Omega^1 = \mathrm{He}\Bigr(P_{\ell}\big(
I_{N-1}\otimes A
-U\mathcal{L}_{\ell}W\otimes BK\big)\Bigr)
+\frac{1}{\eta}P_\ell\big(U \otimes D \big)
 (U \otimes D \big)'P_\ell
 +\sum_{j=1}^s\pi_{\ell j}(\alpha(t))
P_{j}(\alpha(t)).
$ 
\normalsize
By left- and right-multiplying $\Psi$ by
$\diag(P_{\ell}^{-1},T^{-1})$, making
the change of variables $Y_{\ell}=P_{\ell}^{-1}$,
$S=T^{-1}$, $X_{\ell}=GP_{\ell}^{-1}$, 
also noticing from \eqref{eq:polytope}
that
$\pi_{\ell j}(\alpha(t))=\sum_{i=1}^r\alpha_i\pi_{\ell j}^i$,
it is sufficient to test the conditions on the
vertices of $\Pi(t)$ \cite{Geromel1991}, then
applying Schur complement two consecutive times 
on the non-linear terms gives 
inequality \eqref{eq:main_theorem1}. 
Therefore, if inequality \eqref{eq:main_theorem1} 
is satisfied, we
have that 
$
\DD V(\z(t))\leq \eta \w(t)'\w(t).
$
From Dynkin's formula, we
have that for all
$\ell\in\cS$,
\vspace{-0.22cm}
\small
\begin{IEEEeqnarray*}{rl}
\mathbb{E}\big(V(\z(t)) \big)
\leq~ &
V(\z(0))+ \eta\mathbb{E}\left(\int_0^t\w(\tau)'\w(\tau)d\tau
\right)
\nonumber \\
< ~ & V(\z(0))+N\eta \rho.
\vspace{-0.22cm}
\end{IEEEeqnarray*}
\normalsize
Finally, we establish the three cases 
in the Theorem \ref{theorem:1}, namely:
(i) if for each $\ell\in\cS$
we have
$V(\z(0))\leq 1$, that is
$\z(0)\in\EE(P_{\ell},1)$, all
trajectories will remain within
$\EE(P_\ell,1+N\rho\eta)=
\big\{\z(t)\in\mathbb{R}^{m(N-1)}:\z(t)'P_\ell\z(t)\leq
1+N\rho\eta\big\}$
for all $\ell\in\cS$
and $t\geq 0$;
(ii) if, for each $\ell\in\cS$,
$V(\z(0)) = 0$, which implies
that $\z(0) = 0$, then 
$\z(t)\in
\EE(P_\ell,N\rho\eta)=
\big\{\z(t)\in\mathbb{R}^{m(N-1)}:
\z(t)'P_\ell\z(t)\leq
N\rho\eta\big\},~\forall \ell\in\cS$, and finally;
(iii) in absence of disturbances 
$\DD V(\z(t))\leq 0$, and $V(\z(t))=0$ if, and
only if $\z(t)=0$,
whenever for each 
$\ell\in\cS$,
$\z(0)\in\EE(P_\ell,1)$. 
Hence, 
the set $\EE(P_\ell,1)$ is 
an estimate included in the region of convergence.

Henceforth, we demonstrate that if inequality
\eqref{eq:inq_sat}
is satisfied, then our previous assumption
that
$\bu(t)\in 
\mathbb{S}(\boldsymbol{\vartheta}(t),u_{\max})$ holds.
Replacing $Y_{\ell}$ and $X_{\ell}$ by 
$P_{\ell}^{-1}$ and $P_{\ell}^{-1}G'$ 
in 
\eqref{eq:inq_sat}, respectively,
left- and right-multiplying the result
by $\diag(P_{\ell},1)$, and recalling
that $\gamma$ has a different
description according the
cases (i)-(iii), we get
\vspace{-0.13cm}
\begin{IEEEeqnarray*}{ll}
\begin{bmatrix}
P_{\ell} & * \\
(\LL_{\ell}W\otimes K )_{(q)}-G_{(q)} &
\gamma u_{\max}^2
\end{bmatrix} \geq 0.
\vspace{-0.10cm}
\end{IEEEeqnarray*}
By applying Schur complement and left- and
right-multiplying the result by
$\z(t)'$ and $\z(t)$, respectively, we get that
\small
\vspace{-0.22cm}
\begin{IEEEeqnarray*}{l}
\z(t)' \Bigr((\LL_{\ell}W\otimes
K )_{(q)}-G_{(q)}\Bigr)'\gamma^{-1}u_{\max}^{-2}\Bigr((\LL_{\ell}W\otimes
K )_{(q)}-G_{(q)}\Bigr)
\nonumber \\ 
\times \z(t)
\leq
\z(t)'P_{\ell}\z(t)\leq V(\z(t)).
\vspace{-0.22cm}
\end{IEEEeqnarray*}
\normalsize
Hence,
condition \eqref{eq:inq_sat} ensures that
$\bu(t)\in\mathbb{S}({\boldsymbol \vartheta},u_{\max})$, and therefore the
inequality in Lemma \ref{lemma:gensec}
is satisfied.
\end{proof}

Theorem \ref{theorem:1} establishes conditions for bounded trajectories and mean-square consensus in networked systems,
according to agents disturbances.
However, important questions remain unanswered,
including identifying the largest disturbance bounds for bounded trajectories, minimizing the difference between estimated and actual convergence regions, and estimating the $\LLD_2$ gain on $\WW$.
Addressing these problems is crucial for designing and deploying multi-agent systems. The following sections will discuss each of these issues.

\vspace{-0.22cm}
\subsection{Optimization methods for design}
\subsubsection{Maximization of 
the disturbance tolerance}
\vspace{-0.1cm}
Maximizing the disturbance tolerance involves maximizing value of $\rho$, 
such that
computing estimates for the convergence set and the set that contains the trajectories of the network is feasible.
We formulate this as the following optimization problem:
\small
\vspace{-0.15cm}
\begin{IEEEeqnarray}{l}
\label{opt:disturbancies}
\left\{
\begin{matrix}
\underset{{\gamma\in(0,1)}}{\sup} & \bar \rho
\hfill \\
\text{s.t.} &
(a)~\EE(Z,1)
\subseteq
\EE(P_{\ell},1), 
\hfill
\\
&
(b)~\text{LMIs \eqref{eq:main_theorem1}, \eqref{eq:inq_sat}},
\hfill
\end{matrix}
\right.
\vspace{-0.15cm}
\end{IEEEeqnarray}
\normalsize
for all $\ell\in\{1,\dots,s\}$, with 
${\bar \rho}=\sqrt{\rho}$. The constraint
$(a)$
corresponds to placing
an ellipsoid inside the
intersection of ellipsoids
defined by $P_{\ell}$ for all 
$\ell\in\{1,\dots,s\}$, and
it
is equivalent to
\vspace{-0.15cm}
\begin{IEEEeqnarray*}{l}
\begin{bmatrix}
Z & *\\ I & Y_{\ell} 
\end{bmatrix}\geq0,
\text{ for $\ell = 1,\dots,s$,}
\vspace{-0.15cm}
\end{IEEEeqnarray*}
with $Y_{\ell}=P_{\ell}^{-1}$.
Clearly, all 
constraints in 
\eqref{opt:disturbancies} are
linear matrix inequalities for
fixed values of $\gamma\in(0,1)$.

Another relevant issue is 
finding the maximal disturbance
tolerance when the
initial condition is
the equilibrium point,
\textit{i.e.}, $\z(0) = 0$. 
This value
is critical to 
estimating the $\LLD_2$ gain
of the network.
This can be cast
similarly as the 
optimization problem
\eqref{opt:disturbancies},
by letting $\eta=1$ and
$\gamma = 1/N\rho$.
By applying Schur complement 
on inequality \eqref{eq:main_theorem1}
we get,
\small
\vspace{-0.15cm}
\begin{IEEEeqnarray}{l}
\label{eq:opt_rho_zero}
\begin{bmatrix}
 \Lambda + (U \otimes D \big)(U \otimes D \big)'
& * & * \\
S\big(U\otimes B\big)'-X_{\ell} & -2S & * \\
R_{\ell}' & 0 & -Q_{\ell}
\end{bmatrix}
\leq 0.
\end{IEEEeqnarray}
\normalsize
Therefore, the problem of estimating the
maximal disturbance tolerance from the consensus point 
can be cast as,
\small
\vspace{-0.4cm}
\begin{IEEEeqnarray}{l}
\label{opt:disturbancies_zero}
\left\{
\begin{matrix}
{\min} & \gamma 
\hfill \\
\text{s.t.} &
(a)~\EE(Z,1)\subseteq
\EE(P_{\ell},1),
\hfill
\\
&
(b)~\text{LMIs \eqref{eq:opt_rho_zero}, \eqref{eq:inq_sat}},
\hfill
\end{matrix}
\right.
\vspace{-0.2cm}
\end{IEEEeqnarray}
\normalsize
for all $\ell\in\{1,\dots,s\}$, 
and $\rho$ is
given by ${\rho }={1}/{N\gamma}$.

\subsubsection{Estimation of the
$\LLD_2$ gain}
\label{subsubsec:estimationL2}
The $\LLD_2$ gain represents 
the ratio between
a bounded output
and a bounded
input for a system 
starting from the origin.
In our scenario, we define an output for the multi-agent system 
\eqref{eq:MA} on the disagreement variable as,
\small
$
\y(t) = C\z(t),
$
\normalsize
in which $C$ is an matrix with
appropriate dimension that
properly weight disagreement variables of interest.
Therefore, we compute the $\LLD_2$ gain of the multi-agent system,
subject to
energy bounded disturbances according the following
corollary:
\begin{coro}[$\boldsymbol{\LLD_2}$
gain]
For given positive scalars $\rho$
and $\varrho$, and
time-varying transition matrix $\Pi(t)$,
if there exist symmetric
positive definite
matrices 
$S\in\mathbb{R}^{Np\times Np}$,
$Y_{\ell}\in\mathbb{R}^{m(N-1)
\times m(N-1)}$,
and matrices 
$X_{\ell}\in\mathbb{R}^{Np\times m(N-1)}$
for all $\ell\in\{1,...,s\}$,
such that the following
matrix inequalities hold
for all $i\in\{1,\dots,r\}$:
\small
\vspace{-0.2cm}
\begin{IEEEeqnarray}{l}
\label{eq:main_theorem2}
\begin{bmatrix}
\Lambda +
(U \otimes D \big)(U \otimes D \big)' & * & * & * \\
S\big(U\otimes B\big)'-X_{\ell} & -2S & * & *
\\
R_{\ell}' & 0 & -Q_{\ell} & * \\
CY_{\ell} & 0 & 0 & -\varrho^2 I
\end{bmatrix}
\leq 0,
\end{IEEEeqnarray}
\vspace{-0.25cm}
\begin{IEEEeqnarray}{ll}
\label{eq:inq_sat2}
\begin{bmatrix}
Y_{\ell} & * \\
(\LL_{\ell}W\otimes K )_{(q)}-X_{\ell(q)} &
u_{\max}^2 \gamma
\end{bmatrix} \geq 0,
\vspace{-0.1cm}
\end{IEEEeqnarray}
\normalsize
 for $q=1,...,Np,$
 where $\gamma = 1/N\rho$,  
$Y_{\ell}=P_{\ell}^{-1}$,
and $\Lambda$, $R_{\ell}$, and $Q_{\ell}$ are
defined in \ref{eq:main_theorem1}.
Then, the $\LLD_2$ gain from 
$\w(t)$ to $\y(t)$ for all
$w_i(t)\in \mathcal{W}$ is no larger 
than $\varrho$.
\end{coro}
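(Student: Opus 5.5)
The plan is to reuse the Lyapunov machinery of Theorem \ref{theorem:1}, case (ii), with $\eta=1$ and $\gamma=1/N\rho$, and add an output term so as to obtain a dissipation inequality of the form
\begin{equation*}
\DD V(\z(t)) + \varrho^{-2}\y(t)'\y(t) - \w(t)'\w(t) \leq 0 .
\end{equation*}
I will take the same stochastic Lyapunov functional $V(\z(t),\theta_t=\ell)=\z(t)'P_\ell\z(t)$ and the same bound on the cross term $2\z(t)'P_\ell(U\otimes D)\w(t)$ with $\eta=1$. I will also add the generalised sector term from Lemma \ref{lemma:gensec} with $\boldsymbol\vartheta(t)=G\z(t)$. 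This gives $\DD V \leq \chi(t)'\Psi\chi(t)+\w(t)'\w(t)$ exactly as in \eqref{eq:aux_theorem2}.

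Adding $\varrho^{-2}\z(t)'C'C\z(t)$ to both sides, I need
\begin{equation*}
\Psi+\diag(\varrho^{-2}C'C,0)\leq 0 .
\end{equation*}
The next step is the congruence with $\diag(Y_\ell,S)$ and the change of variables $Y_\ell=P_\ell^{-1}$, $S=T^{-1}$, $X_\ell=GY_\ell$. The transition-rate terms are then handled by a Schur complement with $R_\ell,Q_\ell$, as before. A further Schur complement on $\varrho^{-2}Y_\ell C'CY_\ell$ yields the extra block row $[CY_\ell~0~0~-\varrho^2I]$, that is, \eqref{eq:main_theorem2}. Affinity in $\alpha(t)$ means checking the vertices $i=1,\dots,r$ suffices. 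Condition \eqref{eq:inq_sat2} is used exactly as in the last part of the proof of Theorem \ref{theorem:1}. It shows that $\bu(t)\in\mathbb{S}(G\z(t),u_{\max})$ whenever $\z(t)\in\EE(P_\ell,\gamma^{-1})=\EE(P_\ell,N\rho)$. Note that \eqref{eq:inq_sat2} as printed presumably intends the product $(\LL_\ell W\otimes K)_{(q)}Y_\ell$; I will read it that way.

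The main obstacle is circularity. The sector condition, and hence the dissipation inequality, is valid only while the trajectory stays in $\RR(\z(t),N\rho)$, and invariance of that region is itself deduced from the dissipation inequality. I will close this as in case (ii) of Theorem \ref{theorem:1}. Dropping the nonnegative output term gives $\DD V\leq \w'\w$, so with $\z(0)=0$ Dynkin's formula yields $\mathbb{E}V(\z(t))<N\rho=\gamma^{-1}$ for $w_i\in\WW$. I will argue by a first-exit-time contradiction, using a stopping time, that the sector condition holds for all $t\geq0$. The per-agent energy bound $\rho$ sums to the total bound $N\rho$.

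Finally I integrate the full dissipation inequality via Dynkin's formula from $0$ to $t$ with $V(\z(0))=0$ and $V\geq 0$. This gives $\varrho^{-2}\mathbb{E}\int_0^t\y'\y\,d\tau\leq \mathbb{E}\int_0^t\w'\w\,d\tau$. Letting $t\to\infty$ gives $\lVert\y\rVert_{\LLD_2}\leq\varrho\lVert\w\rVert_{\LLD_2}$ in the mean-square sense for all $w_i\in\WW$. This is the claimed bound on the $\LLD_2$ gain.
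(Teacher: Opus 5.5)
Apart from one step, this is the paper's proof. It uses the same $V$, the cross-term bound with $\eta=1$, the sector term with $\boldsymbol\vartheta(t)=G\z(t)$, and the congruence by $\diag(Y_\ell,S)$ with $X_\ell=GY_\ell$. The two Schur complements give the $R_\ell,Q_\ell$ and $CY_\ell,-\varrho^2 I$ rows of \eqref{eq:main_theorem2}, followed by the vertex argument and Dynkin's formula with $V(\z(0))=0$, $V\geq 0$. Reading \eqref{eq:inq_sat2} with the factor $Y_\ell$ is what the paper needs. Your final ratio $\varrho^{-2}\mathbb{E}\int_0^t\y'\y\,d\tau\leq\int_0^t\w'\w\,d\tau$ is a cleaner gain statement than the paper's $\mathbb{E}\int_0^t\y'\y\,d\tau<\varrho^2N\rho$.

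The step that fails is your closure of the circularity. The paper never closes it: it assumes $\bu(t)\in\mathbb{S}(G\z(t),u_{\max})$ ``for the sake of continuity'' and only shows that \eqref{eq:inq_sat2} gives this whenever $\z(t)\in\EE(P_\ell,N\rho)$. Your first-exit argument does not close it either.

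In a Markov jump system, $\DD V\leq\w'\w$ bounds the weak infinitesimal generator, which controls conditional expectations, not sample paths. Between jumps $\frac{d}{dt}V=\DD V-\sum_{j\neq\ell}\pi_{\ell j}\z'(P_j-P_\ell)\z$, and at a jump $V$ switches from $\z'P_\ell\z$ to $\z'P_j\z$. So $V(t)\leq V(0)+\int_0^t\w'\w\,ds$ need not hold along a path, and $\RR(\z(t),N\rho)$ is not pathwise invariant. With $\tau$ the exit time, Dynkin's formula for the stopped process gives only $N\rho\,\mathrm{Pr}\{\tau\leq t\}\leq\mathbb{E}\,V(\z(t\wedge\tau),\theta_{t\wedge\tau})\leq\int_0^\infty\w'\w\,ds<N\rho$. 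That is, $\mathrm{Pr}\{\tau<\infty\}\leq(N\rho)^{-1}\int_0^\infty\w'\w\,ds<1$, which is no contradiction.

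No cleverer argument can yield $\mathrm{Pr}\{\tau<\infty\}=0$ under these hypotheses. In the paper's first example $\LL_2=0$, so once $\w$ vanishes the mode-2 flow is $\dot\z=(I_{N-1}\otimes A)\z$, and $A$ has an eigenvalue $\approx 0.097>0$. The LMIs only require $\mathrm{He}((I_{N-1}\otimes A)Y_2)+\pi_{22}Y_2+\cdots\leq 0$ there, which the negative $\pi_{22}Y_2$ can satisfy. Sojourns in mode 2 are exponentially distributed, so a generic nonzero $\z$ is carried out of any bounded set with positive probability.

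What your argument actually proves is the bound for the output stopped at $\tau$, $\mathbb{E}\int_0^{\tau}\y'\y\,ds\leq\varrho^2\int_0^\infty\w'\w\,ds$, together with the above bound on $\mathrm{Pr}\{\tau<\infty\}$. Alternatively, impose a mode-independent $Y_\ell\equiv Y$. Then the transition-rate terms cancel because $\sum_j\pi_{\ell j}=0$, $V$ does not jump, and $\DD V$ is the pathwise derivative, so your first-exit argument works path by path. The cost is that every mode must satisfy the inequality on its own. That is infeasible as soon as one topology leaves an open-loop unstable agent uncontrolled, which is exactly the intermittent situation the paper targets.
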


\begin{proof}
Similarly to Theorem \ref{theorem:1},
inequality
\eqref{eq:inq_sat2} ensures 
that
$\bu(t)\in 
\mathbb{S}(\boldsymbol{\vartheta}(t),u_{\max})$ whenever
$\z(t)\in 
\EE(P_{\ell},N\rho)$.
We have that \eqref{eq:main_theorem2}
is equivalent to
\small
\vspace{-0.2cm}
\begin{IEEEeqnarray}{ll}
\label{eq:L2_equiv}
~& 2{\z}(t)'P_\ell\Big[
\big(
I_{N-1}\otimes A
-
U\mathcal{L}_{\ell}W\otimes BK\big)
{\boldsymbol z}(t) 
\nonumber \\
&+
\big(U\otimes B\big)
\Phi\bigr(\bu(t)\bigr) 
\Big]
+{\z}(t)'P_\ell\big(U \otimes D \big)
 (U \otimes D \big)'P_\ell\z(t)
\nonumber \\
&
+\frac{1}{\varrho^2}
{\z}(t)'C'C\z(t)
+\z(t)'\sum_{j=1}^s\pi_{\ell j}(\alpha(t))
P_{j}(\alpha(t))\z(t)
\nonumber \\
&-2\Phi \big( {\boldsymbol{u}}(t)\big)' T
\big[\Phi \big({\boldsymbol{u}}(t)) 
+ G\z(t) \big]
\leq 0,
\vspace{-0.1cm}
\end{IEEEeqnarray}
\normalsize
which can be shown by applying Schur complement on
\eqref{eq:main_theorem2} two times  
consecutively,
making the change of variables
$Y_{\ell}=P_{\ell}^{-1}$, $S=T^{-1}$, 
$X_{\ell}=GP_{\ell}^{-1}$, left- and 
right-multiplying 
it by $\diag(P_{\ell},T)$
and by $[\z(t)'~\Phi(\bu(t))']$
and its transpose, respectively.
Notice that this condition
contains inequality
\eqref{eq:main_theorem1} with
$\eta=1$. Therefore, 
admitting that
\eqref{eq:main_theorem2} is
satisfied and considering
inequalities 
\eqref{eq:aux_theorem2} 
and \eqref{eq:L2_equiv} we
have that
\small
\vspace{-0.23cm}
\begin{IEEEeqnarray*}{ll}
\DD V(\z(t))
\leq~& -\frac{1}{\varrho^2}
\z(t)'C'C\z(t)+\w(t)'\w(t),
\vspace{-0.10cm}
\end{IEEEeqnarray*}
\normalsize
which from Dynkin's formula
yields
\small
\vspace{-0.2cm}
\begin{IEEEeqnarray*}{lll}
 \mathbb{E}\big(V(&\z(t))\big)
\leq 
\nonumber \\
&-\frac{1}{\varrho^2}
\mathbb{E}\Bigg(
\int_0^t \z(\tau)'C'C\z(\tau)d\tau
\Bigg)
+
\mathbb{E}\Bigg(
\int_0^t\w(\tau)'\w(\tau)d\tau
\Bigg).
\vspace{-0.25cm}
\end{IEEEeqnarray*}
\normalsize
Noticing that $V(\z(t))\geq0$ and
$w_i(t)\in\WW$, for all $i\in\mathcal{V}$, 
implies
\small
\vspace{-0.35cm}
\begin{IEEEeqnarray*}{ll}
\mathbb{E}\Bigg(
\int_0^t \y(\tau)'\y(\tau)d\tau 
 \Bigg)<
\varrho^2
N\rho,
\vspace{-0.2cm}
\end{IEEEeqnarray*}
\normalsize
which concludes the
demonstration.
\end{proof}

\subsubsection{Synthesis of the controllers' gains}
The problem of synthesizing the gains of the feedback matrices
can be tackled by letting 
the matrix $K$ be an additional variable
on the previous formulations.
We provide this procedure as a corollary,
since a particular choice for the
structure of
some variables is necessary to produce
linear convex problems.
The adaptation of Theorem \ref{theorem:1}
to compute the feedback gains
is given as follows.

\begin{coro}
\label{coro:1}
Let the networked system
(\ref{eq:MA}) in closed-loop with the consensus protocol
(\ref{eq:input})
subject to input saturation and stochastic
time-varying 
communication topology be represented by
equation (\ref{eq:stab_sect}).
For given positive scalars $\rho$ and $\eta$,
time-varying transition matrix $\Pi(t)$,
if there exist symmetric
positive definite
matrices 
$S\in\mathbb{R}^{Np\times Np}$,
$\bar Y_{\ell}=I_{N-1}\otimes 
F\in\mathbb{R}^{m(N-1)
\times m(N-1)}$,
and matrices 
$X_{\ell}\in\mathbb{R}^{Np\times m(N-1)}$
for all $\ell\in\{1,...,s\}$,
such that the following
matrix inequalities hold
for all $i\in\{1,\dots,r\}$:
\small
\vspace{-0.25cm}
\begin{IEEEeqnarray}{l}
\label{eq:synt_main}
\begin{bmatrix}
\bar{\Lambda} & * & * & * \\
S\big(U\otimes B\big)'-X_{\ell} & -2S & * & * \\
\sqrt{\rho}(U \otimes D \big)' & 0 & 
\frac{1-\gamma}{\gamma}I & * \\
R_{\ell}' & 0 & 0 & -Q_{\ell}
\end{bmatrix}
\leq 0,
\end{IEEEeqnarray}
\vspace{-0.25cm}
\begin{IEEEeqnarray}{ll}
\label{eq:inq_sat_synth}
\begin{bmatrix}
\bar Y_{\ell} & * \\
(\LL_{\ell}W\otimes \bar K )_{(q)}-X_{\ell(q)} 
&
u_{\max}^2 \gamma
\end{bmatrix} \geq 0, \text{for } q=1,...,Np,
\quad
\vspace{-0.17cm}
\end{IEEEeqnarray}
where,  
\vspace{-0.17cm}
\begin{IEEEeqnarray*}{l} 
\bar{\Lambda}=\mathrm{He}\Bigr(\big(
I_{N-1}\otimes A\big)\bar Y_{\ell}
-U\mathcal{L}_{\ell}W
\otimes B\bar K\Bigr)
+\pi_{\ell \ell}^i
\bar Y_{\ell}\\
R_{\ell}=\big[\sqrt{\pi_{\ell 1}^i}\bar Y_{\ell}~\cdots~
 \sqrt{\pi_{\ell (\ell-1)}^i}\bar Y_{\ell} 
 ~\sqrt{\pi_{\ell(\ell+1)}^i}\bar Y_{\ell}
 \cdots\sqrt{\pi_{\ell s}^i}\bar Y_{\ell}\big],
 \text{ and}
 \nonumber \\
Q_{\ell}=\diag\big( \bar Y_{1},~\cdots,
 ~\bar Y_{\ell-1},
 ~\bar Y_{\ell+1},
 ~\cdots,
 ~\bar Y_{s}\big),
\vspace{-0.15cm}
\end{IEEEeqnarray*}
\normalsize
then items (i)-(iii) in Theorem \ref{theorem:1}
hold
with agents in closed-loop with
feedback matrix $K=\bar{K}F^{-1}$.
\end{coro}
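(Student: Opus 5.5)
The plan is to reduce Corollary~\ref{coro:1} to Theorem~\ref{theorem:1} by exhibiting, from any feasible solution of \eqref{eq:synt_main}--\eqref{eq:inq_sat_synth}, a feasible solution of \eqref{eq:main_theorem1}--\eqref{eq:inq_sat} for the gain $K=\bar K F^{-1}$. Items (i)--(iii) then follow verbatim from Theorem~\ref{theorem:1}. The natural candidate is $Y_\ell=\bar Y_\ell=I_{N-1}\otimes F$, with the same $S$ and $X_\ell$. Positive definiteness of $\bar Y_\ell$ forces $F>0$, so $F^{-1}$ exists and $K$ is well defined.

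The key step is to check that the bilinear term $(U\mathcal{L}_\ell W\otimes BK)Y_\ell$ in $\Lambda$ becomes the linear term $U\mathcal{L}_\ell W\otimes B\bar K$ of $\bar\Lambda$. By the mixed-product rule,
\begin{equation*}
(U\mathcal{L}_\ell W\otimes BK)(I_{N-1}\otimes F)=U\mathcal{L}_\ell W\otimes BKF=U\mathcal{L}_\ell W\otimes B\bar K .
\end{equation*}
Similarly, $(I_{N-1}\otimes A)Y_\ell$ is kept as it appears in $\bar\Lambda$. Hence $\Lambda=\bar\Lambda$, and since $R_\ell$ and $Q_\ell$ are built from the same $Y$'s, \eqref{eq:synt_main} coincides with \eqref{eq:main_theorem1}.

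For the saturation condition, the row identity
\begin{equation*}
(\mathcal{L}_\ell W\otimes K)_{(q)}Y_\ell=\big((\mathcal{L}_\ell W\otimes K)(I_{N-1}\otimes F)\big)_{(q)}=(\mathcal{L}_\ell W\otimes \bar K)_{(q)}
\end{equation*}
holds. So \eqref{eq:inq_sat_synth} is exactly \eqref{eq:inq_sat} with the substituted variables. Here one must read $(\mathcal{L}_\ell W\otimes K)$ as the $Np\times m(N-1)$ matrix mapping $\z$ to $\bu$. Dimensions work because $\mathcal{L}_\ell W$ is $N\times(N-1)$ and $F$ is $m\times m$.

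The only real obstacle is conceptual rather than computational. The structure $\bar Y_\ell=I_{N-1}\otimes F$, with $F$ common to all modes and all agents, is what allows $K$ to be pulled through the Kronecker product. A mode-dependent $F_\ell$ would yield a mode-dependent gain, which is inconsistent with the fixed protocol \eqref{eq:input}. So the proof must note that a common $F$ is used, even though the statement indexes $\bar Y_\ell$ by $\ell$; this is the source of conservatism relative to Theorem~\ref{theorem:1}. With that observation the corollary follows by invoking Theorem~\ref{theorem:1} with $P_\ell=\bar Y_\ell^{-1}=I_{N-1}\otimes F^{-1}$.
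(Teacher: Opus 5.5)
Your proposal is correct and is essentially the paper's own argument: set $Y_\ell=\bar Y_\ell=I_{N-1}\otimes F$ in the conditions of Theorem~\ref{theorem:1}, use the mixed-product rule to rewrite $(U\mathcal{L}_\ell W\otimes BK)\bar Y_\ell$ and $(\mathcal{L}_\ell W\otimes K)_{(q)}\bar Y_\ell$ as expressions linear in $\bar K=KF$, and then invoke Theorem~\ref{theorem:1} with $K=\bar K F^{-1}$. Your remark that $F$ must be common to all modes is correct and is left implicit in the paper. It also means the Markov coupling terms cancel, since each row of the generator sums to zero, so the Lyapunov function is effectively mode-independent.
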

\begin{proof}
The demonstration follows the same lines
of
Theorem \ref{theorem:1}.
Replacing the matrices
variables $Y_{\ell}$ by the
structured variables
$\bar Y_{\ell}=I_{N-1}\otimes F$
for all $\ell\in\cS$. 
Then, 
the conditions 
are obtained by
applying the Kronecker identity
in all products with the
gain matrix $K$
and, subsequently, making the change of 
variables $\bar K=KF$.
\end{proof}

\begin{remark}
\label{remark_comp_burden}
The conditions proposed in 
Theorem \ref{theorem:1}
and Corollary 
\ref{coro:1}
are Linear Matrix Inequalities, hence
their computational complexity 
grow with the number of decision variables
\cite{Boyd1994}.
In our scenario, the
number of agents, the 
dimension of their state-space, as well
as the number of vertices of the convex hull
of the
the time-varying transition matrix
determine the
computational complexity of the conditions.
Precisely, we have
$\frac{\ell}{2}\big((Np)^2 
+ (m(N-1))^2
+2N^2pm
+ Np  + m(N-1)
 - 2Npm
\big)$
variables in the conditions of Theorem
\ref{theorem:1},
and 
$\frac{\ell}{2}\big((Np)^2 
+2N^2pm
+\frac{m^2+m}{\ell} 
 - 2Npm
\big)$
variables in the conditions of
Corollary \ref{coro:1}.
Although these numbers might be high
for some systems, 
the proposed results
achieve less conservative results in
a more general context than
similar works from the literature. In addition,
all conditions are computed off-line.
\end{remark}

\vspace{-0.15cm}
\section{Simulation Results}
\vspace{-0.05cm}
\label{sec:4}
\begin{example}
We consider the problem
of maximizing the disturbance
tolerance of a multi-agent
system starting from the 
equilibrium point.
The network is described by
(\ref{eq:MA}) with
\small
\vspace{-0.17cm}
\begin{IEEEeqnarray*}{ll}
A = \begin{bmatrix}
0.1 & -0.1 \\ 0.1 & -3.0
\end{bmatrix},
~
B = \begin{bmatrix}
5 & 0\\ 0 & 1
\end{bmatrix},
~
K = \begin{bmatrix}
0.1 & 0.5 \\ 0 & 0
\end{bmatrix},
\vspace{-0.17cm}
\end{IEEEeqnarray*}
\normalsize
and $D= I_2$, where $K$ is computed using Corollary 3.3 and then fixed.
In addition, we assume
that the switching of the 
time-varying communication
topology is captured by the
Markov process defined by the following
transition matrix:
\small
\vspace{-0.16cm}
\begin{IEEEeqnarray*}{l}
\Pi = \begin{bmatrix}
-2 & 1 & 1\\
2 & -4 & 2\\
1 & 1 & -2
\end{bmatrix},
\vspace{-0.16cm}
\end{IEEEeqnarray*}
\normalsize
and the topologies associated with
each state of the Markov process
are given by
\small
\vspace{-0.17cm}
\begin{IEEEeqnarray*}{l}
\LL_{1} = \begin{bmatrix}
1 & 0 & -1\\
0 & 0 & 0\\
0 & -1 & 1
\end{bmatrix};
\LL_{2} = \begin{bmatrix}
0 & 0 & 0\\
0 & 0 & 0\\
0 & 0 & 0
\end{bmatrix};
\LL_{3} = \begin{bmatrix}
0 & 0 & 0\\
0 & 1 & -1\\
0 & 0 & 0
\end{bmatrix}.
\vspace{-0.17cm}
\end{IEEEeqnarray*}
\normalsize
Notice that no single topology is connected, however, the union of the graphs
has a directed spanning tree.

By setting a limit for the actuators and solving optimization problem
(\ref{opt:disturbancies_zero})
for the multi-agent system starting from the equilibrium point, we obtain the maximum energy bound, 
$N\rho$, for disturbances affecting the network shown in 
Figure \ref{fig:satXdist}.
Unsurprisingly,
this indicates that the ability of
disturbance rejection of the network 
grows with the
actuation limit. 
\begin{figure}[h]
\vspace{-0.3cm}
\centering
\includegraphics[scale=0.5]{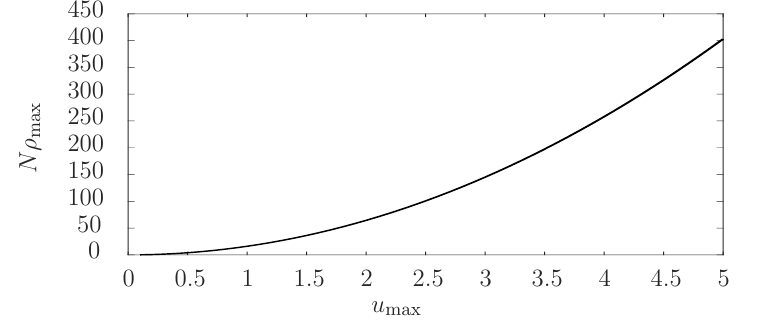}
\vspace{-0.3cm}
\caption{\footnotesize{
Variation of disturbance level
rejection according saturation limit.}}
\label{fig:satXdist}
\vspace{-0.32cm}
\end{figure}

For $u_{\max}=3$ we
have $N\rho=145.1$, hence if the
energy of the disturbances that
impacts the network is less than
$145.1$, the agents' states
are guaranteed to 
remain within the 
designed region. 
\begin{figure}[h]
\vspace{-0.3cm}
\centering
\includegraphics[scale=0.73]{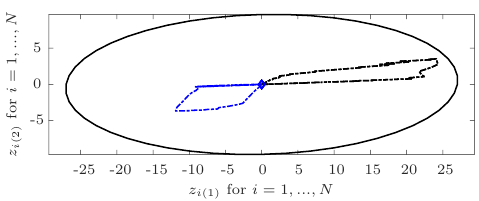}
\vspace{-0.25cm}
\caption{\footnotesize{Two realizations
of the multi-agent system trajectories
subject to ramp (in black
line) and constant (in blue line) 
disturbances with energy limited by 145.}}
\label{fig:traj_ex1}
\vspace{-0.4cm}
\end{figure}

Figure \ref{fig:traj_ex1} depicts a realization
of the trajectories
of the multi-agent system from the
equilibrium point subject to 
a ramp disturbance 
impacting only agent $1$ in black line, \textit{i.e.}, 
$w_1(t) =[t~t]'$ and $w_i(t) = [0~0]'$ for
$i=2,3,$ and a constant disturbance
impacting only agent $2$ in blue line, \textit{i.e.},
$w_2(t) =[10~10]'$ and $w_i(t) = [0~0]'$ for
$i=1,3,$ both with specific duration
such that $N\rho_{\max}\leq 145$.
The control inputs and switching topologies
are shown in Figure 
\ref{fig:swit_ctrl} for both cases.
\begin{figure}[ht]
\centering
\includegraphics[scale=0.50]{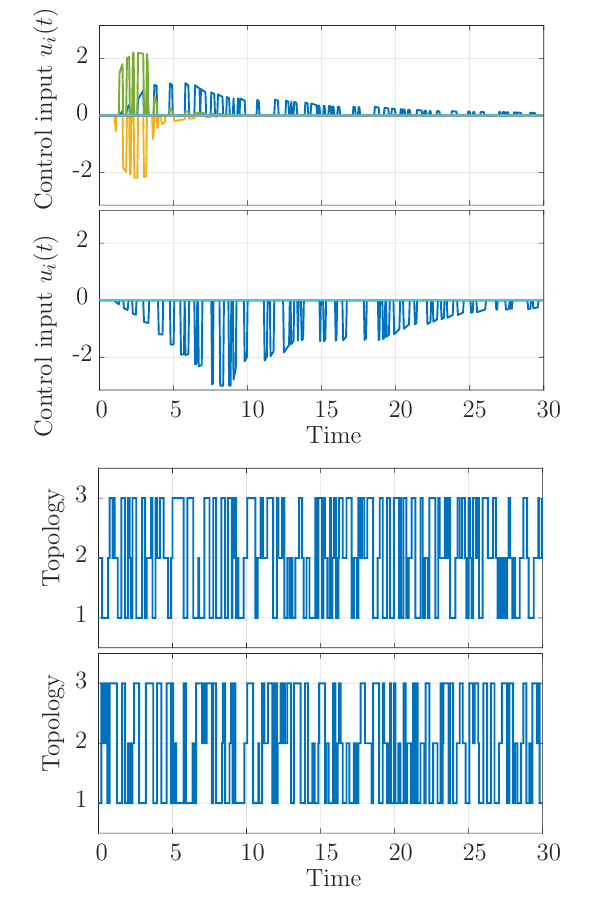}
\vspace{-0.5cm}
\caption{\footnotesize{Control
signals and switching topologies for
the multi-agent system
subject to ramp (in the bottom)
and constant (in the top) 
disturbances with energy limited by 145.}}
\label{fig:swit_ctrl}
\vspace{-0.32cm}
\end{figure}
\label{example:1}
\end{example}

\begin{example}
We show the impact of the
switching frequency of the
stochastic communication on the convergence region size. Using the same
system as
in Example \ref{example:1},
we optimize
the region $\RR(\z(t),1)$ for different 
transition matrices, according to case (i) of
Theorem \ref{theorem:1}.
 By replacing the
objective function in
(\ref{opt:disturbancies})
by
$\min{~\mathrm{Trace}(Z)}$
and setting $\gamma = 0.8$,
we solve optimization problem
with $\rho$ as
variable. 
Figure \ref{fig:traceXepsi} 
shows the
trace of $Z$ as function
of $\varepsilon$, a positive
constant that scales the
transition
matrix as follow
\small
\begin{IEEEeqnarray*}{l}
\Pi = \varepsilon
\begin{bmatrix}
-2 & 1 & 1\\
2 & -4 & 2\\
1 & 1 & -2
\end{bmatrix}.
\end{IEEEeqnarray*}
\normalsize
Figure \ref{fig:traceXepsi} 
shows  
a non-linear relation 
between the size of of the
estimate of the region of 
convergence and the 
probability of link formation.
Decreasing the
frequency of interactions reduces the regions of 
convergence, however the
biggest region for this network
occurs when $\varepsilon=0.51$.
For larger values the size starts to
contract.
\vspace{-0.15cm}
\begin{figure}[h]
\centering
\includegraphics[scale=0.45]{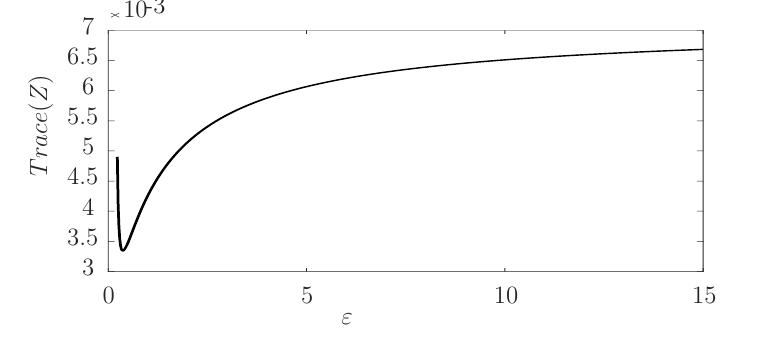}
\vspace{-0.25cm}
\caption{\footnotesize{
Trace of $Z$ as function
of different scales
parameters of the
transition matrix $\Pi$.}}
\label{fig:traceXepsi}
\end{figure}
\vspace{-0.25cm}
The estimate of the region of
convergence are
shown in Figure \ref{fig:tree_ell} for
$\varepsilon=0.25$,
$\varepsilon=0.51$,
and
$\varepsilon=15.00$.

\begin{figure}[h]
\vspace{0.2cm}
\centering
\includegraphics[scale=0.55]{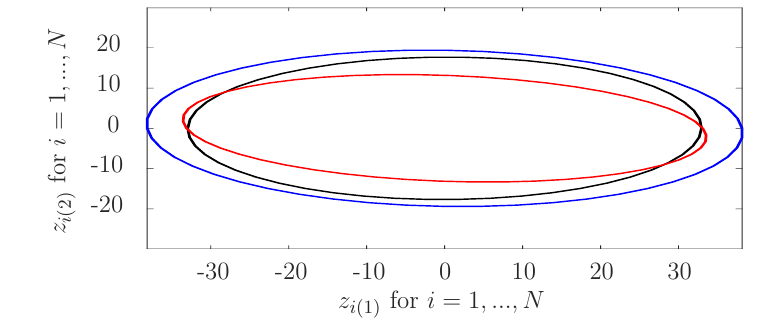}
\vspace{-0.25cm}
\caption{\footnotesize{
Estimates of the region of
convergence for three
different values of
$\varepsilon$, in black
for $\varepsilon=0.25$,
in blue for 
$\varepsilon=0.51$, and
in $\varepsilon=15.00$ in 
red line.}}
\label{fig:tree_ell}
\vspace{-0.7cm}
\end{figure}

\end{example}

\vspace{-0.2cm}
\section{Conclusions}
\vspace{-0.15cm}
\label{sec:5}
In this paper, we investigated the effects of disturbances and stochastic intermittent communications on agents performing consensus under saturating inputs.
We formulated the problem as a stability problem of Markov jump linear systems, proposed conditions to estimate the region of convergence and the region that bounds the trajectories of the agents' states, and demonstrated the formulation of optimization convex methods to design networks with optimized parameters.
Numerical simulations illustrate our proposed approach.
Future work includes finding specific control gains for different network topologies, formulating the consensus problem for heterogeneous networks, and investigating event-triggered control approaches.


%
\bibliographystyle{elsarticle-num}
\vspace{-0.25cm}
\bibliography{references.bib}


\end{document}